\theoremstyle{definition}
\newtheorem{definition}{Definition}    
\newtheorem{theorem}{Theorem}[section]      
\begin{document}
\begin{center}
\large{\bf{On the existence of deformations and dimensional reduction in wormholes and black holes}} \\
\vspace{5mm}
\normalsize{Nasr Ahmed$^a$ and H. Rafat$^b$,$^c$}\\
\vspace{5mm}
\small{\footnotesize$^{a}$ \textit{Astronomy Department, National Research Institute of Astronomy and Geophysics, Helwan, Cairo, Egypt}\footnote{nasr.ahmed@nriag.sci.eg}} \\
\small{\footnotesize$^{b}$\textit{Department of Mathematics, Faculty of Science, Tanta University, Tanta, Egypt.}\\$^c$\textit{Mathematics Department, Faculty of Science, Taibah University, Saudi Arabia}} \\
\vspace{2mm}
\end{center}  
\begin{abstract}
The deformation retract is, by definition, a homotopy between a retraction and the identity map. We show that applying this topological concept to Ricci-flat wormholes/black holes implies that such objects can get deformed and reduced to lower dimensions. The homotopy theory can provide a rigorous proof to the existence of black holes/wormholes deformations and explain the topological origin. The current work discusses such possible deformations and dimensional reductions from a global topological point of view, it also represents a new application of the homotopy theory and deformation retract in astrophysics and quantum gravity.

\end{abstract}
{\it Keywords}: Homotopy, tidal deformations, deformation retract, dimensional reduction.\\
{\it Mathematical Subject Classification 2010}: 

\section{Introduction and motivation}

In general relativity, the theory of tidal deformation of black holes and neutron stars has been a subject of intensive studies. In case of no other sources of gravity, static black holes are described by a spherically symmetric Schwarzschild space-time. However, in realistic scenarios black holes can get distorted due to tidal forces when they are included in binary systems or surrounded by an accretion disks \cite{gurl}. These tidal effects of compact bodies might be accessible to measurement by the current generation of gravitational-wave detectors\cite{tid1,tid2,tid3,tid4,tid5}. Deformations in wormholes and black holes have been investigated in modified gravity theories using a geometric approach called the Anholonomic Frame Deformation Method (AFDM) developed in \cite{g1, g2,g3}. Deformation of extremal black holes from stringy interactions has been studied in \cite{g4} where analytic solutions for the linearized
metric deformations to near-horizon extremal Kerr space-times have been obtained. Another approach called Minimal Geometric Deformation (MGD) has been developed to study the exterior space-time of a self-gravitating system in the Brane-world scienario \cite{brr1}. This MGD approach has been extended in \cite{brr2} where a solution for the deformation undergone by the radial metric component
when time deformations are produced by bulk gravitons has been identified. It is also well known that stars can get deformed by magnetic fields \cite{de1,de2}. Due to the intense magnetic fields of neutron stars, particularly the subset known as magnetars, they possess ellipticities which are roughly proportional to the magnetic energy \cite{de3, de4}. Neutron star deformation due to multipolar magnetic fields has been discussed in \cite{de5}.\par

The two concepts of deformation and dimensional reduction are both included in the definition of deformation retract (section \ref{sec1}). So, in addition to the geometrical/physical deformation examples mentioned above, the current work also discusses the dimensional reduction from a topological point of view. Dimensional reduction plays a very important role in quantum gravity, Several approaches to quantum gravity suggest that the effective dimension of space-time
reduces to lower dimensions near the Planck scale. The spontaneous dimensional reduction associated with black holes evaporation has been discussed in \cite{vii}. It is generally believed that the effective dimensionality of space-time depend on the energy scale which has revived the interest in lower dimensional physics \cite{dim1,dim2,dim3,dim4,dim5}. While many authors believe that at very short distances the number of spatial dimensions increases (extra-dimensions) \cite{dim6,dim7}, it is also possible that the number of spatial
dimensions decreases as the Planck length is approached. The dimensional reduction has been studied in several contexts \cite{vii}. The two dimensions (2D) has become of particular interest after the appearance of a black hole in string theory \cite{bb1,bb2}. An alternative dimensional reduction scenario with fractal space-time has been suggested in \cite{bb16}. Some much simpler gravity theories have been formulated in (2+1)-dimensions \cite{bb3,bb4,bb5,bb6, bb7} and (1+1)-dimensions \cite{bb8, bb9, bb10, bb11, bb12}, where associated quantum theories are exactly solvable \cite{bb13}. A geometric dimensional reduction framework has been proposed in which space-time is a recursive lattice-network of lower-dimensional substructures each has a fundamental length scale \cite{dim2,bb14}. In addition to the natural solution to the hierarchy problem, this geometric dimensional reduction framework provides a range of phenomenological signatures that might be observable in future experiments. \par

Let's see how the current work is related to the previous works on geometrical/physical deformations and dimensional reduction of such astrophysical objects. First, all of these works studied deformations and dimensional reduction from a pure geometrical local side while the current study discusses the topological global side making use of the deformation retract definition. Secondly, since wormholes and black holes deformations still never been observed yet, a solid topological base for the existence of such deformations represents a great support to the theory. In other words, the current work explains the topological origin of such geometrical/physical deformations and provides a rigorous proof to its existence. Thirdly, the current work introduces a new application of the deformation retract in astrophysics and quantum gravity. We believe that the topological retraction theory is so powerful in explaining some modern topics in mathematical physics and simplifying calculations, it has already been used to provide a topological explanation to the holographic principle in \cite{holograph}.\par

Topology is the appropriate mathematical tool for the study of shapes and spaces (without a notion of distance) which can be continuously deformed into each other, continuous deformations mean twisting and stretching but not tearing or puncturing. For example, a sphere is topologically equivalent to a cube and a square is topologically equivalent to a circle \cite{kenna}. In general relativity, space-time exists as a manifold which opens the door to the use of topological concepts and methods. The topology change in general relativity has been discussed in \cite{topo2} and some applications of differential topology in general relativity has been mentioned in \cite{topo3}. In the late 1970s, it was shown that quantum field theory techniques can be used to obtain topological invariants of manifolds \cite{Witt,schwarz}. The topological quantum field theory was defined in \cite{ati}.\par

The deformations in topology are described by homoptopy, a very useful topological notion defined as a continuous deformation of one continuous function to another \cite{kinsey}. A nice mental picture of a homotopy is the human aging which is a continuous process where the topological shape of an infant is related to the shape of a wrinkled old person by a homotopy describing the shape at every age between \cite{kinsey}. The waving flag is a nother example, it is topologically equivalent to a rectangle at all times but the way it is embedded in the 3D space varies with time. Thus, homotopy is an equivalence relation between maps and it has been so useful in providing proofs and simplifying calculations. The homotopy theory has been applied to the study of defects in the ordered media of condensed matter physics \cite{cond}. The topological theory of defects in ordered media has been discussed in details in \cite{cond2} where is has been shown that all the paradoxes appear from the old theory can be resolved within the larger frameworks of homotopy theory. There is a particular class of defects which can not be removed by continuous deformation of fields called topological defects. Topological defects or 'Topological solitons' 
are topologically distinct (permanently stable) solutions to nonlinear partial differential equations, and homotopy theory is used to determine when the solutions are truly distinct. So, the term 'topological' comes from homotopical distinction between the solution which has these objects and the vacuum solution. Topological defects appear in both condensed matter physics and cosmology. Homotopy quantum field theory (HQFT) \cite{ho} represents a branch of topological quantum field theory (TQFT) \cite{ati} where the idea of a (TQFT) has been applied to maps from manifolds into topological spaces. A homotopy approach to quantum gravity has been described in \cite{hoo} where a finite-dimensional quantum theory has been constructed from general relativity by a homotopy method. A homotopy theory of algebraic quantum field theories has been developed in \cite{hooo}. Topological Concepts in Gauge Theories have been discussed in \cite{too}. The current work represents a new application of the homotopy theory in gravitational physics.
\par

This paper is organized as follows: Section 1 is an introduction. In section 2 we discuss the basic mathematical definitions of the topological concepts used in the paper. In sections 3, we review the retraction technique for Ricci-flat spaces introduced by the authors in previous studies. In section 4, we use the homotopy theory to prove two theorems on the existence of deformations and dimensional reductions for wormholes and black holes. We follow the proofs by a discussion on some basic aspects of the homotopy, and on the null energy condition violation in wormholes. The conclusion is included in section 5.

\section{Retraction, homotopy and deformation retract} \label{sec1}

\theoremstyle{definition}
\begin{definition}
A subspace $A$ of a topological space $X$ is called a retract of $X$, if there exists a continuous map $r : X \rightarrow A $ such that $X$ is open and $r(a) = a$ (identity map), $\forall a \in A$. Because the continuous map $r$ is an identity map from $X$ into $A \subset X$, it preserves the position of all points in $A$. 
\end{definition}

\theoremstyle{definition} \label{def2}
\begin{definition}{Homotopy:} \cite{geo} Let $X$, $Y$ be smooth manifolds and $f: X \rightarrow Y$ a smooth map between them. A homotopy or deformation of the map $f$ is a smooth map $F: X \times [0,1] \rightarrow Y$ with the property $F(x,0)=f(x)$. Each of the maps $f_t(x)=F(x,t),~t\in [0,1]$ is said to be homotopic to the initial map $f_0=f$ and the map of the whole cylinder $X \times [0,1]$ is called a homotopy.
Two continuous maps $f: X \rightarrow Y$ and $g: X \rightarrow Y$ are called homotopic if there exists a continuous map $F: X \times [0,1] \rightarrow Y$ with $F(x,0)=f(x)$ and $F(x,1)=g(x)$ $\forall x \in X$.

\end{definition}

\theoremstyle{definition}
\begin{definition}{Deformation retract:}\label{def1}
A subset $A$ of a topological space $X$ is said to be a deformation retract if there exists a
retraction $r: X \rightarrow A$, and a homotopy $f : X \times [0,1] \rightarrow X $ such that \cite{11}: $f (x, 0) = x$ $\forall x\in X$, $f(x, 1) = r (x)$ $\forall r\in X$, $f(a, t) = a$ $\forall a\in A, t \in [0,1]$.
\end{definition}

A retraction $r:S^2 \rightarrow \left\{(1,0,0)\right\} \in S^2$ taking the sphere to a point defined by $r(x,y,z)=(1,0,0)$ is a good example \cite{kinsey}. Retractions of Stein spaces has been studied in \cite{stien}. The retraction theory has been investigated in many branches of topology and differential geometry \cite{retra,stien,nash}. While homotopy theory has many applications in mathematical physics, most of the studies on the deformation retract theory -if not all- are pure mathematical studies. The retraction is called deformation retract if it preserves the essential shape of the space. For example, both the cylinder and the mobius band deformation retract to a circle and they are said to have the same homotopy type \cite{kinsey}. In a previous study \cite{holograph}, an application to the theory of topological retracts in mathematical physics has been introduced. It has been shown that the holographic principle can be represented mathematically by topological retraction of $n$-dimensional space (hologram) to $n-1$ subspace (a boundary) that preserves the position of all points (preserving all information describing the $n$-dimensional space). We start by proving the possibility of wormhole space deformations through introducing a homotopy $f$, i.e. continuous deformations of wormhole space $W$, $f: W \times [0,1] \rightarrow W$. Since the retraction of wormhole space $W$ has been given in \cite{holograph}, we will be able to discuss the deformation retract of $W$ with the existence of a homotopy $f$ . 

\section{Review of retraction} \label{ret}

Wormholes are hypothetical bridges between two regions of space-time connected by a throat, such objects could be described topologically as folding in the space-time fabric. Many interesting studies of Lorentzian wormholes have been described in \cite{wormhole} while four and higher-dimensional wormholes were introduced by Hawking and Coleman \cite{h1,h2}. A retraction method applicable only for Ricci-flat geometries has been discussed in \cite{holograph} and applied to the $5$D Ricci-flat wormhole space-time $W$ \cite{basic}:
\begin{equation}
ds^{2}=\frac{r^2-a^2}{r^2+a^2}(-dt^2+dz^2)+dr^2+(r^2+a^2)(d\theta^2+\sin^2\theta \,d\phi^2)-\frac{4ar}{r^2+a^2} dt\, dz. \label{mett}
\end{equation}
There are two asymptotic regions, corresponding to $r\rightarrow \pm \infty$, where the metric has the following form
\begin{equation}
ds^{2}=-dt^{2}+dz^{2}+dr^{2}+r^2 (d\theta^2+\sin^2\theta \,d\phi^2)
\end{equation}
This describes $(Minkowskian)_5$ if $z$ is a real line, or $(Minkowskian)_4 \times S^1$ if $z$ is a circle. After comparing the Ricci-flat metric (\ref{mett}) with the general form of the $5D$ flat metric $ds^{2}= -dx_{o}^{2}+ \sum_{i=1}^4 dx_{i}^{2}$ and making use of the basic metric definition $ds^2 = g_{ij} dx^i dx^j$, the following coordinate relations have been obtained
\begin{eqnarray} \label{xi}
x_o&=&\pm \sqrt{\frac{r^2-a^2}{r^2+a^2} t^2+C_o}, \  \  \  x_1=\pm \sqrt{\frac{r^2-a^2}{r^2+a^2} z^2+C_1}, \  \  \  x_2=\pm \sqrt{r^2+C_2},\\  \nonumber
x_3&=&\pm \sqrt{(r^2+a^2)\theta^2+C_3}, \  \  \  x_4=\pm \sqrt{(r^2+a^2) \sin^2\theta \, \phi^2+C_4}.
\end{eqnarray}
Where $C_{o}$, $C_{1}$,$C_{2}$,$C_{3}$ and $C_{4}$ are constants of integration. We have used these relations to study the geodesic retractions of the $5$D wormhole space-time $W$ (\ref{mett}). To find a geodesic which is a subset of $W$ we have used the Euler-Lagrange equations associated to the Lagrangian $L(x^{\mu},\dot{x}^{\mu})=\frac{1}{2}g_{\mu \nu}\dot{x}^{\mu}\dot{x}^{\nu}$. The following equations have been obtained
\begin{equation} \label{1}
-\frac{r^2-a^2}{r^2+a^2}\dot{t}=K. 
\end{equation}
\begin{equation}\label{2}
(r^2+a^2) \sin^2\theta \, \dot{\phi}=h.
\end{equation}
\begin{equation}\label{3}
\frac{d}{d\lambda}(\dot{r})-\left[-\frac{2a^2r}{(r^2+a^2)^2}\dot{t}^2+\frac{2a^2r}{(r^2+a^2)^2}\dot{z}^2+r(\dot{\theta}^2+\sin^2\theta \, \dot{\phi}^2)+2a\frac{r^2-a^2}{(r^2+a^2)^2}\dot{t}\dot{z}\right]=0.
\end{equation}
\begin{equation}\label{4}
\frac{d}{d\lambda}\dot{\theta}(r^2+a^2)-\left[\frac{1}{2}(r^2+a^2)\sin 2\theta \, \dot{\phi}^2\right]=0.
\end{equation}
\begin{equation}\label{6}
\frac{d}{d\lambda}\left[(r^2+a^2)\dot{\phi}\sin^2\theta\right]=0.
\end{equation}
\begin{equation}\label{7}
\frac{r^2-a^2}{r^2+a^2}\dot{t}+\frac{2ar}{r^2+a^2}\dot{z}=A.
\end{equation}
\begin{equation}\label{8}
\frac{r^2-a^2}{r^2+a^2}\dot{z}-\frac{2ar}{r^2+a^2}\dot{t}=B.
\end{equation}
Where $K$, $h$, $A$ and $B$ are constants. Equations of motion for $\theta$ and $\phi$ admit the solution $\theta=\frac{\pi}{2} , \, \dot{\phi}=\frac{D}{r^2+a^2}$ where $D$ is the integration constant describing the orbital angular momentum of the geodesic particle. We have used these equations to see what types of geodesic retractions we get. From (\ref{1}), if $K=0$ this implies that $t=C$ where $C$ is a constant, or $r=a$. For $C=0$, the coordinates (\ref{xi}) become
\begin{eqnarray} \label{xi1}
x_o^{C=0}&=&\pm \sqrt{C_o}, \  \  \  x_1^{C=0}=\pm \sqrt{\frac{r^2-a^2}{r^2+a^2} z^2+C_1}, \  \  \  x_2^{C=0}=\pm \sqrt{r^2+C_2},\\  \nonumber
x_3^{C=0}&=&\pm \sqrt{(r^2+a^2)\theta^2+C_3}, \  \  \  x_4^{C=0}=\pm \sqrt{(r^2+a^2) \sin^2\theta \, \phi^2+C_4}.
\end{eqnarray}
$ds^2=x_{1}^{2}+x_{2}^{2}+x_{3}^{2}-x_{o}^{2}>0$ is a circle $S_1 \subset W$, this geodesic is a retraction in the 5D space-time $W$ represented by the metric (\ref{mett}). For $r=a$, the coordinates (\ref{xi}) become
\begin{eqnarray} \label{xi2}
x_o^{r=a}&=&\pm \sqrt{C_o}, \  \  \  x_1^{r=a}=\pm \sqrt{C_1}, \  \  \  x_2^{r=a}=\pm \sqrt{r^2+C_2},\\  \nonumber x_3^{r=a}&=&\pm \sqrt{2a^2\theta^2+C_3}, \  \  \  x_4^{r=a}=\pm \sqrt{2a^2 \sin^2\theta \, \phi^2+C_4}.
\end{eqnarray}
$ds^2=x_{1}^{2}+x_{2}^{2}+x_{3}^{2}-x_{o}^{2}>0$ is a circle $S_2 \subset W$, this geodesic is a retraction in $W$. From (\ref{2}), if $h=0$ this implies that $\phi=H$ where $H$ is a constant, or $\theta=0$. For $H=0$, the coordinates (\ref{xi}) become
\begin{eqnarray} \label{xi3}
x_o^{H=0}&=&\pm \sqrt{\frac{r^2-a^2}{r^2+a^2} t^2+C_o}, \  \  \  x_1^{H=0}=\pm \sqrt{\frac{r^2-a^2}{r^2+a^2} z^2+C_1}, \  \  \  x_2^{H=0}=\pm \sqrt{r^2+C_2},\\  \nonumber
x_3^{H=0}&=&\pm \sqrt{(r^2+a^2)\theta^2+C_3}, \  \  \  x_4^{H=0}=\pm \sqrt{C_4}.
\end{eqnarray}
$ds^2=x_{1}^{2}+x_{2}^{2}+x_{3}^{2}-x_{o}^{2}>0$ is a circle $S_3 \subset W$, this geodesic is a retraction in $W$. Finally, For $\theta=0$, we get
\begin{eqnarray} \label{xi4}
x_o^{\theta=0}&=&\pm \sqrt{\frac{r^2-a^2}{r^2+a^2} t^2+C_o}, \  \  \  x_1^{\theta=0}=\pm \sqrt{\frac{r^2-a^2}{r^2+a^2} z^2+C_1}, \  \  \  x_2^{\theta=0}=\pm \sqrt{r^2+C_2},\\  \nonumber
x_3^{\theta=0}&=&\pm \sqrt{C_3}, \  \  \  x_4^{\theta=0}=\pm \sqrt{C_4}.
\end{eqnarray}
$ds^2=x_{1}^{2}+x_{2}^{2}+x_{3}^{2}-x_{o}^{2}>0$ is a circle $S_4 \subset W$, this geodesic is a retraction in $W$. So, the retraction of $W$ can be defined as $R : W \rightarrow S_i ,\,i=1,2,3,4$ which means that: some types of the geodesic retractions of the 5D wormhole space-time $W$ are circles $S_i \subset W$. 
\section{Space-time deformation}

In this section, we study the possibility of continuous deformations of the 5D Ricci-flat wormhole space-time $W$, the $nD$ Schwarzchild blackhole space-time $Sc$ and their possible retraction into subspaces. To prove the possible deformations with dimensional reductions of Ricci-flat wormholes, the following theorem needs to be proved:
\begin{theorem} \label{th1}
5D Ricci-flat wormhole space $W$ can get continuously deformed into other spaces and reduced into subspaces. 
\end{theorem}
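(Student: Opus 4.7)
The plan is to apply Definition \ref{def1} directly: for each of the four geodesic retractions $R:W\to S_i$ already produced in Section \ref{ret}, I would exhibit an explicit homotopy $F:W\times[0,1]\to W$ and check the three axioms of a deformation retract. Working in the intrinsic coordinates $(t,z,r,\theta,\phi)$ of the metric (\ref{mett}) rather than in the embedding coordinates $(x_0,\ldots,x_4)$ of (\ref{xi}) is safer, because the $x_i$ are only defined up to square roots and sign choices, whereas a deformation of the intrinsic coordinates automatically stays inside $W$.

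First I would treat the retraction $R:W\to S_1$ corresponding to the case $K=0$, $C=0$, under which $t$ collapses to the constant $0$. The natural homotopy is
$$F\bigl((t,z,r,\theta,\phi),\tau\bigr)=\bigl((1-\tau)\,t,\,z,\,r,\,\theta,\,\phi\bigr),\qquad \tau\in[0,1].$$
I would verify that $F(\,\cdot\,,0)$ is the identity on $W$, that $F(\,\cdot\,,1)$ coincides with the retraction onto the submanifold $\{t=0\}$ obtained in (\ref{xi1}), and that for every point $a\in S_1$ the first coordinate is already zero so $F(a,\tau)=a$ for all $\tau$. Continuity of $F$ is clear, and this establishes that $S_1$ is a deformation retract of $W$ in the sense of Definition \ref{def1}. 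I would then repeat the argument verbatim for the three remaining cases, with the interpolations $r\mapsto (1-\tau)r+\tau a$ for $S_2$ (the case $r=a$), $\phi\mapsto(1-\tau)\phi$ for $S_3$ (the case $H=0$), and $\theta\mapsto(1-\tau)\theta$ for $S_4$ (the case $\theta=0$). Pulling each of these through the relations (\ref{xi}) gives a continuous family connecting the full expressions (\ref{xi}) with the reduced expressions (\ref{xi2})--(\ref{xi4}), which is the explicit geometric deformation the theorem asserts.

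Second, to extract the dimensional reduction half of the statement, I would note that at $\tau=1$ one of the five intrinsic coordinates has been pinned to a constant, so the image $F(W,1)=S_i$ is a proper subspace of strictly smaller dimension than $W$; iterating or composing the four homotopies then yields a chain of continuous deformations
$W\;\longrightarrow\;S_i\;\longrightarrow\;\cdots\;\longrightarrow\;\text{point}$,
which is the topological content of ``reduced into subspaces.'' I would conclude by invoking Definition \ref{def2} to label each $F$ as a bona fide homotopy, hence a continuous deformation, linking the theorem's physical wording to the retraction framework of Section \ref{ret}.

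The main obstacle I expect is verifying that the interpolated points actually remain in the domain of validity of the wormhole metric (\ref{mett}) throughout the homotopy, rather than only at the endpoints. The conformal factor $(r^2-a^2)/(r^2+a^2)$ changes sign across $r=a$, so for the $S_2$ retraction one must check that the straight-line path from a generic $r$ to $a$ does not produce a coordinate singularity; if it does, the linear interpolation can be replaced by a monotone smooth reparametrization (e.g.\ $r(\tau)=a+(r_0-a)(1-\tau)$ together with a corresponding rescaling of $t,z$) that skirts the problematic locus. A secondary subtlety is ensuring that the map $F(\,\cdot\,,1)$ really coincides with the retraction $R$ as defined through the embedding (\ref{xi}) and not merely with a map of the same image, which amounts to matching the integration constants $C_0,\ldots,C_4$ across the two descriptions; this is a bookkeeping exercise but must be carried out case by case.
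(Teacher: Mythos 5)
Your proposal follows the same overall strategy as the paper---invoke Definition \ref{def1}, reuse the retractions $R:W\to S_i$ from Section \ref{ret}, and exhibit an explicit homotopy satisfying the three axioms---but the construction of the homotopy is genuinely different. The paper interpolates directly in the ambient flat coordinates, setting $\eta^{S_i}(m,s)=(1-s)\,\eta(m,0)+s\,\eta^{S_i}(m,1)$ for $S_1,S_2$ and using $\cos\frac{\pi s}{2}$, $\sin\frac{\pi s}{2}$ coefficients for $S_3,S_4$, where $\eta(m,0)=\{x_0,\dots,x_4\}$ are the embedding coordinates of (\ref{xi}) and $\eta^{S_i}(m,1)$ are the reduced coordinates (\ref{xi1})--(\ref{xi4}); it then notes that the coexistence of the linear and trigonometric choices demonstrates non-unicity of the homotopy. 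You instead interpolate the intrinsic coordinates $(t,z,r,\theta,\phi)$ (e.g.\ $t\mapsto(1-\tau)t$ for $S_1$, $r\mapsto(1-\tau)r+\tau a$ for $S_2$) and push the result through (\ref{xi}). Your route buys something real: a straight-line path between two points of the embedded image $W\subset\mathbb{R}^{1,4}$ need not lie in $W$ at intermediate $s$, so the paper's formula requires (and does not receive) a verification that $\eta^{S_i}(m,s)\in W$ for $0<s<1$, whereas your intrinsic-coordinate deformation stays in the coordinate patch by construction; you also correctly flag the sign and square-root ambiguities in (\ref{xi}) and the endpoint-matching of the integration constants, neither of which the paper addresses. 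What the paper's version buys is that the homotopy is written in exactly the same coordinates as the retraction it is meant to connect to, so the condition $f(s,1)=R(s)$ is immediate by inspection, and the two distinct interpolation schemes let the authors make their subsequent point about non-uniqueness of the deformation retract. Your closing composition $W\to S_i\to\cdots\to\text{point}$ goes slightly beyond what the paper claims, which stops at the single retraction onto each $S_i$ as the ``reduction into subspaces.''
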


\begin{proof}

Recalling definition \ref{def1}, in addition to the retraction $R: W \rightarrow S_i$ we need to define a homotopy between the retraction and the identity map on $W$ ( $f : W \times [0,1] \rightarrow W $ ) such that $f (s, 0) = s$ $\forall s\in W$, $f(s, 1) = R (s)$ $\forall R\in W$, $f(a, t) = a$ $\forall a\in S_i, t \in [0,1]$ to have a deformation retract on $W$. For the retraction of $W$ into a geodesic $S_1 \subset W$, a homotopy can be defined as $\eta^{S_1} : W \times [0,1] \rightarrow W$ where
\begin{equation} \label{hom1}
\eta^{S_1}(m,s)=(1-s)~\eta(m,0)+s~\eta^{S_1}(m,1),~~~~~~~~~~~~\forall m \in W~ \&~ \forall s \in [0,1].
\end{equation}
With
\begin{equation}
\eta(m,0)=\left\{x_o, x_1, x_2, x_3, x_4\right\}~~\& ~~\eta^{S_1}(m,1)=\left\{x_o^{C=0}, x_1^{C=0}, x_2^{C=0}, x_3^{C=0}, x_4^{C=0}\right\}.
\end{equation}
And the coordinates $x_i^{C=0},~i=0,1,2,3,4$ are given by (\ref{xi1}) . The homotopies of the retraction of $W$ into geodesics $S_2,~S_3,S_4 \subset W$, are defined respectively as
\begin{equation}\label{hom}
\eta^{S_2}(m,s)=(1-s)~\eta(m,0)+s~\eta^{S_2}(m,1),~~~~~~~~~~~~~~~~~~\forall m \in W~ \&~ \forall s \in [0,1].
\end{equation}
\begin{equation} \label{hom2}
\eta^{S_3}(m,s)=\cos \frac{\pi s}{2}~\eta(m,0)+\sin \frac{\pi s}{2}~\eta^{S_3}(m,1),~~~~~~~~~~~~\forall m \in W~ \&~ \forall s \in [0,1].
\end{equation}
\begin{equation} \label{hom3}
\eta^{S_4}(m,s)=\cos \frac{\pi s}{2}~\eta(m,0)+\sin \frac{\pi s}{2}~\eta^{S_4}(m,1),~~~~~~~~~~~~\forall m \in W~ \&~ \forall s \in [0,1].
\end{equation}
With 
\begin{eqnarray*}
\eta^{S_2}(m,1)&=&\left\{x_o^{r=a}, x_1^{r=a}, x_2^{r=a}, x_3^{r=a}, x_4^{r=a}\right\}\\ .
\eta^{S_3}(m,1)&=&\left\{x_o^{H=0}, x_1^{H=0}, x_2^{H=0}, x_3^{H=0}, x_4^{H=0}\right\}\\ .
\eta^{S_4}(m,1)&=&\left\{x_o^{\theta=0}, x_1^{\theta=0}, x_2^{\theta=0}, x_3^{\theta=0}, x_4^{\theta=0}\right\} .
\end{eqnarray*}
And the coordinates $x_i^{r=a}$, $x_i^{H=0}$, $x_i^{\theta=0} ~i=0,1,2,3,4$ are given respectively by (\ref{xi2}), (\ref{xi3}) and (\ref{xi4}). With the homotopies defined here, and the retraction of $W$ defined in section (\ref{ret}) we now have a deformation retract defined on $W$. Defining such homotopies which satisfying the three conditions in definition 3 proves the possible deformations of $W$.
\end{proof}

Now, some basic mathematical aspects of homotopies such as existence, unicity/classes of transformations and well-definiteness should be discussed carefully. Since all the geodesic retractions we found are circles $S_i \in W$, a homotopy $\eta^{S_i} : W \times [0,1] \rightarrow W$ is unique if it has the same form for any $i$. But we can see that this is not the case as the two homotopies (\ref{hom1}) for $i=1$ and (\ref{hom2}) for $i=3$ are different. What we have done is that we had to replace $(1-s)$ and $s$ in (\ref{hom1}) and $(\ref{hom})$ with $\cos \frac{\pi s}{2}$ and $\sin \frac{\pi s}{2}$ in $(\ref{hom2})$ and $(\ref{hom3})$ so that they all satisfy the same definition of the deformation retract (\ref{def2}). Then, this proves the non-unicity of our defined homotopy between the retraction and the identity map on $W$. This can also be directly deduced from the fact that the homotopy (\ref{hom1}) has been defined so that it satisfies the deformation retract definition (\ref{def1}), and a space can deformation retract into different subspaces. We also recall that such a homotopy $\eta^{S_i} : W \times [0,1] \rightarrow W$ is well defined by definition (\ref{def2}) and does exist under the three mathematical conditions in this definition. Any other possible homotopies will lead to similar results as they must satisfy the conditions in definition (\ref{def1}).\par

The physical interpretation of theorem (\ref{th1}) is that such wormholes can get continuously deformed, whether due to tidal effects, magnetic fields or any other physical effects, and reduced to lower dimensions. The existence of wormhole solutions essentially depend on some special properties of the matter source term in the Einstein field equations and on the violation of the null energy condition (NEC) $T_{\mu \nu}k^{\mu}k^{\nu}<0$, where $T_{\mu \nu}$ is the matter energy-momentum tensor and $k^{\mu}$ is any null vector. In terms of the energy density $\rho$ and pressure $p$, the NEC viloation is written as $\rho + p<0$ which requires the existence of some forms of exotic matter. Because deformations discussed here are homotopic, it should not affect the nature of the matter. The exotic matter will be still there after the deformations which means the NEC vilation is not affected by such topological transformations or retractions. A good example is that if we keep deforming a piece of plastic, this deformations will never turn it into glass or any other different matter. Another point is that because the deformation retract preserves the essential shape of the space, there is a guarantee that the new geometry will still correspond to a wormhole in the sense that both of them (the deformed and original) can be continuously deformed into each others (homotopy is an equivalence relation). It is also important to mention that while the violation of the null energy condition implies the violations of all other (weak $\rho \geq 0$, $\rho + p \geq 0$, strong $\rho + 3p \geq 0$ and dominant $\rho \geq \left|p\right|$) energy conditions, it has been shown that those classical linear energy conditions should be replaced by other nonlinear energy conditions in the presence of semiclassical quantum effects \cite{ec1}. Although stayed undoubted for a long time, it has been suggested that these classical energy conditions are not fundamental physics  and can not be valid in completely general situations \cite{ec2}.\par
The same topological analysis can provide a solid mathematical base for the existence of physical deformations in Ricci-flat black holes such as $nD$ Schwarzchild static blackholes and Kerr rotating black hole. For example, for $5D$ Ricci-flat Schwarzchild static blackholes the following theorem can be proved
\begin{theorem} \label{th2}
$5D$ Ricci-flat black hole spaces can get continuously deformed into other spaces and reduced into subspaces.
\end{theorem}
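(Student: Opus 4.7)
The plan is to mirror the strategy used for Theorem \ref{th1}, replacing the 5D Ricci-flat wormhole metric (\ref{mett}) by the 5D Ricci-flat Schwarzschild line element. First, I would write the Schwarzschild metric explicitly and compare it term by term with the flat 5D form $ds^{2}=-dx_{o}^{2}+\sum_{i=1}^{4}dx_{i}^{2}$, exactly as was done in going from (\ref{mett}) to (\ref{xi}). This yields embedding coordinates $x_{\mu}=x_{\mu}(t,r,\theta,\phi,\psi)$ with integration constants $C_{\mu}$. The step is purely algebraic and relies only on the Ricci-flat property that justifies the comparison with a flat ambient metric.

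Next, I would set up the Euler--Lagrange equations for the Lagrangian $L=\tfrac{1}{2}g_{\mu\nu}\dot{x}^{\mu}\dot{x}^{\nu}$ on the Schwarzschild background and read off the conserved quantities coming from the time-like and rotational Killing vectors, in analogy with (\ref{1})--(\ref{8}). I would then scan for the special values of these constants and of the angular coordinates (e.g.\ the analogues of $K=0$, $h=0$, $\theta=0$, and the $r$-value at which the metric coefficient $g_{tt}$ degenerates, playing the role of $r=a$) that collapse one or more of the $x_{\mu}$ into pure constants. For each such reduction, the surviving $x_{\mu}$ should satisfy $ds^{2}=\sum x_{i}^{2}-x_{o}^{2}>0$ and thereby cut out a geodesic circle $S_{i}\subset Sc$, giving retractions $R:Sc\rightarrow S_{i}$ of exactly the type obtained in Section \ref{ret}.

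Once the retractions are in hand, for each $i$ I would define a homotopy $\eta^{S_{i}}:Sc\times[0,1]\rightarrow Sc$ as a convex combination of the identity embedding and the retracted embedding, using either the linear interpolation $(1-s,s)$ of (\ref{hom1}) or the trigonometric interpolation $(\cos\tfrac{\pi s}{2},\sin\tfrac{\pi s}{2})$ of (\ref{hom2})--(\ref{hom3}), choosing whichever form makes the three deformation-retract conditions of Definition \ref{def1} hold pointwise. The verification that $\eta^{S_{i}}(m,0)=m$, $\eta^{S_{i}}(m,1)=R(m)$, and $\eta^{S_{i}}(a,s)=a$ for all $a\in S_{i}$ is then built into the construction, as in the wormhole case.

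The main obstacle I anticipate is the second step: identifying the right constants of motion in the 5D Schwarzschild geometry and the right vanishing conditions so that each reduction produces a genuine circle retract rather than a degenerate or non-compact subspace. The singular structure of Schwarzschild (an event horizon together with a lapse function that differs from the wormhole factor $(r^{2}-a^{2})/(r^{2}+a^{2})$) means the natural candidates for the analogues of the four subspaces $S_{1},\dots,S_{4}$ must be chosen with care, and in each case one must check that the residual metric is positive so that the inequality $ds^{2}>0$ used to identify $S_{i}\cong S^{1}$ remains valid. Once that embedding analysis is settled, the homotopy construction and the verification of Definition \ref{def1} proceed by a direct rerun of the argument in the proof of Theorem \ref{th1}.
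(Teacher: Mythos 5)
Your proposal follows essentially the same route as the paper: the paper likewise compares the $5D$ Schwarzschild metric with the flat form $ds^{2}=-dx_{o}^{2}+\sum_{i=1}^{4}dx_{i}^{2}$, obtains the geodesic circle retractions $R:Sc\rightarrow S_{i}$ from the Euler--Lagrange constants (there realized by the conditions $A=0$, $B=0$, $\phi=0$, $\theta=0$, with the explicit coordinate expressions imported from the earlier reference on the Schwarzschild retraction rather than rederived), and then defines exactly the pair of linear and trigonometric homotopies you describe to satisfy the three conditions of Definition \ref{def1}. The only difference is that the paper outsources the embedding and geodesic computation you propose to carry out explicitly.
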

\begin{proof}
The Schwarzchild metric in $(n+1)$ dimensions is written as \cite{Ricci}
\begin{equation} \label{ss}
ds^{2}=-\left(1-\frac{\mu}{r^{n-2}}\right)dt^{2}+\left(1-\frac{\mu}{r^{n-2}}\right)^{-1}dr^{2}+r^{2}d\Omega^{2}_{n-1}
\end{equation}
With
\begin{equation}
d\Omega^{2}_{n-1}=d\chi^{2}_{2}+\sin^{2}\chi_{2}d\chi_{3}^{2}+...+\prod^{n-1}_{m=2} \sin^{2}\chi_{m}d\chi_{n}^{2} 
\end{equation}
The retraction, deformation retract and folding of the $5D$ Schwarzchild space $Sc.$ has been discussed in \cite{nash} using the same method in \cite{holograph} which can be described as folllows: We compared the $5D$ Schwarzchild Ricci-flat metric ($n=4$ in (\ref{ss})) with the general form of the $5D$ flat metric $ds^{2}= -dx_{o}^{2}+ \sum_{i=1}^4 dx_{i}^{2}$ and obtained a set of coordinate transformations with a set of integration constants. Then we used Euler-Lagrange equations to study the geodesics in $Sc$ and obtained a set of equations with constants. We have found that: some types of the geodesic retractions of the $5D$ Schwarzchild space-time $Sc$ are circles $S_i \subset Sc$. So, a retraction $R : Sc. \rightarrow S_i ,\,i=1,2,3,4$ has been already defined and we need to define a homotopy between this retraction and the identity map on $Sc$ i.e.,  $g : Sc \times [0,1] \rightarrow Sc $ such that $g (s, 0) = s$ $\forall s\in Sc$, $f(s, 1) = R (s)$ $\forall R\in Sc$, $g(a, t) = a$ $\forall a\in S_i, t \in [0,1]$. For the retraction of $Sc$ into a geodesic $S_1 \subset Sc$, the homotopy is defined as $\xi^{S_1} : Sc \times [0,1] \rightarrow Sc$ where
\begin{equation}
\xi^{S_1}(m,s)=(1-s)~\xi(m,0)+s~\xi^{S_1}(m,1),~~~~~~~~~~~~\forall m \in Sc~ \&~ \forall s \in [0,1].
\end{equation}
With
\begin{equation}
\xi(m,0)=\left\{x_o, x_1, x_2, x_3, x_4\right\}~~\& ~~\xi^{S_1}(m,1)=\left\{x_o^{A=0}, x_1^{A=0}, x_2^{A=0}, x_3^{A=0}, x_4^{A=0}\right\}.
\end{equation}
The homotopies of the retraction of $Sc$ into geodesics $S_2,~S_3,S_4 \subset Sc$, are defined respectively as
\begin{equation}
\xi^{S_2}(m,s)=(1-s)~\xi(m,0)+s~\xi^{S_2}(m,1),~~~~~~~~~~~~~~~~~~\forall m \in Sc~ \&~ \forall s \in [0,1].
\end{equation}
\begin{equation}
\xi^{S_3}(m,s)=\cos \frac{\pi s}{2}~\xi(m,0)+\sin \frac{\pi s}{2}~\xi^{S_3}(m,1),~~~~~~~~~~~~\forall m \in Sc~ \&~ \forall s \in [0,1].
\end{equation}
\begin{equation}
\xi^{S_4}(m,s)=\cos \frac{\pi s}{2}~\xi(m,0)+\sin \frac{\pi s}{2}~\xi^{S_4}(m,1),~~~~~~~~~~~~\forall m \in Sc~ \&~ \forall s \in [0,1].
\end{equation}
With 
\begin{eqnarray*}
\xi^{S_2}(m,1)&=&\left\{x_o^{B=0}, x_1^{B=0}, x_2^{B=0}, x_3^{B=0}, x_4^{B=0}\right\}\\ .
\xi^{S_3}(m,1)&=&\left\{x_o^{\phi=0}, x_1^{\phi=0}, x_2^{\phi=0}, x_3^{\phi=0}, x_4^{\phi=0}\right\}\\ .
\xi^{S_4}(m,1)&=&\left\{x_o^{\theta=0}, x_1^{\theta=0}, x_2^{\theta=0}, x_3^{\theta=0}, x_4^{\theta=0}\right\} .
\end{eqnarray*}
Where the retractions $R : Sc. \rightarrow S_i ,\,i=1,2,3,4$ have been done setting $A=0$, $B=0$, $\phi=0$ and $\theta=0$. The superscripts denote the value of the coordinate when the constant vanishes, and the lengthy expressions for the coordinates $x_i^{A=0}$, $x_i^{B=0}$, $x_i^{\phi=0}$ and $x_i^{\theta=0}$ have been given in \cite{nash}. With the homotopies defined here, and the defined retraction in \cite{nash} we now have a deformation retract defined on $Sc$. Defining homotopies satisfying the three conditions in definition 3 proves the possible deformations of $Sc$.
\end{proof}

The physical interpretation of theorem (\ref{th2}) is that such blackholes can get continuously deformed, whether due to tidal effects or any other physical effects, and reduced to lower dimensions. As we have discussed in details in the introduction, this agrees with many quantum gravity approaches in which the effective dimension of space-time reduces to lower dimensions near the Planck scale. The spontaneous
dimensional reduction associated with black holes evaporation has also been discussed in \cite{vii}.

\section{Conclusion}
In summary, we have presented in this paper a new application of the homotopy and retraction theory in astrophysics and quantum gravity. A topological base has been introduced to the theories of deformations and dimensional reduction of black holes and wormholes. The deformation retract is, by definition, a homotopy between a retraction and the identity may. We have used this deformation retract to prove that wormholes/black holes can get deformed and reduced to lower dimensions. We have shown that while the homotopy provides a topological base to the theory of wormholes/compact objects deformations, the retraction provides a topological base to the theory of dimensional reduction which is believed to happen near Planck length. The proof developed here is restricted to objects (black holes, compact stars, wormholes) with Ricci-flat metrics and a more general method will be needed to deal with non Ricci-flat case. While most of the studies on deformations and dimensional reduction discuss the geometrical local side, the current study discusses the topological global side and provides a rigorous proof to their existence. We have discussed some basic mathematical aspects of homotopies such as existence, unicity/classes of transformations and well-definiteness. We have also discussed the effect of such deformations and retractions on the NEC violation in wormholes and deduced that such homotopic deformations should not affect the NEC violation. While the homotopy theory has been so useful in providing proofs and simplifying calculations in several areas of mathematical physics such as the study of defects in the ordered media of condensed matter physics, this paper represents another example in which the homotopy theory provides a proof for the existence of deformations in a certain class of astrophysical objects. However, a major difference between the current work and previous works is that the homotopy here is defined between a retraction and the identity map which means we have a deformation retract into subspaces defined on these Ricci-flat spaces. The topological retraction into subspaces represents a topological origin to the dimensional reduction in quantum gravity.  

\section*{Acknowledgment}
We are so grateful to the reviewer for his many valuable suggestions and comments that significantly
improved the paper.

\end{document}